\documentclass[11pt, a4paper]{article}

\usepackage[utf8]{inputenc}
\usepackage[T1]{fontenc}
\usepackage{amsmath, amssymb, amsthm, amsfonts}
\usepackage{graphicx}
\usepackage{natbib}
\usepackage[margin=2.5cm]{geometry}
\usepackage{booktabs}
\usepackage{hyperref}
\usepackage{bm} 
\usepackage{setspace}
\usepackage{xcolor}
\usepackage{multirow}
\usepackage{subcaption}
\usepackage{mathtools}  
\hypersetup{
    colorlinks=true,
    linkcolor=blue,
    citecolor=blue,
    urlcolor=blue,
    pdfauthor={Mehmet Sıddık Çadırcı},
pdftitle={A Fractional M/M/1 Queue Governed by Stretched Non-Local Time Operators}
}

\theoremstyle{plain}
\newtheorem{theorem}{Theorem}
\newtheorem{lemma}{Lemma}
\newtheorem{proposition}{Proposition}
\newtheorem{corollary}{Corollary}

\theoremstyle{definition}
\newtheorem{definition}{Definition}

\title{\textbf{A Fractional M/M/1 Queue Governed by Stretched Non-Local Time Operators}}

\author{
    Mehmet Sıddık Çadırcı$^{1,}$\thanks{Corresponding author. Email: \texttt{msiddikcadirci@cumhuriyet.edu.tr}}
}

\date{
    \slshape\small
    $^1$ Faculty of Science, Department of Statistics, Cumhuriyet University, Sivas, Türkiye.\\[3ex]
    \normalfont
    \today
}

\begin{document}

\maketitle

\begin{abstract}
We introduce a non-Markovian generalization of the classical M/M/1 queue by incorporating extended nonlocal time dynamics into Kolmogorov forward equations.
We obtain the model by replacing the standard time derivative with an extended Caputo-type operator. It preserves the birth-death transition structure of the standard queue while introducing memory effects into the temporal evolution.
We derive explicit representations for transient state probabilities in terms of the Kilbas-Saigo function, which naturally emerges as the relaxation kernel associated with the stretched operator, using Laplace transform techniques.
We construct a time-varying interpretation and show that the fractional queue can be viewed as a distribution of a classical M/M/1 process evaluated at a non-decreasing random time.
It is observed that the fractional queue can be viewed as a distribution of a classical M/M/1 process evaluated at a non-decreasing random time. We prove that under the standard stability condition $\rho<1$, the steady-state distribution remains geometric and coincides with the distribution of the classical queue, whilst we prove that the stretched fractional parameters significantly affect the convergence rate in the transient regime.
Numerical examples based on Monte Carlo simulations
highlight the effect of the parameters $(\alpha,\gamma)$ on the distribution of empty states, tail length distributions, and the average tail evolution, and validate the flexibility of the proposed framework in capturing long-memory tail dynamics.

\noindent\textbf{Keywords:} M/M/1 queue, fractional calculus, non-Markovian dynamics, stretched fractional operators, Kilbas--Saigo function.

\end{abstract}

\section{Introduction} \label{sec:intro} 

Queueing models are a canonical framework for characterizing congestion and delays in communication systems, service operations, and stochastic networks.
For the classical setting, the M/M/1 queue has a Markovian nature, and it is governed by the linear Kolmogorov forward equations, where the transition probabilities are determined by an ordinary derivative with respect to time.
Empirical dynamics in complex systems, however, often exhibit nonlocal temporal effects
and persistence, which encourages non-Markov generalizations based on fractional calculus.

Fractional time operators are now a standard tool for modeling memory and anomalous relaxation phenomena;
comprehensive reviews
can be seen in \cite{Podlubny1999,Diethelm2010,KilbasSrivastavaTrujillo2006}.
An important probabilistic route connects fractional evolution equations to time changes by
inverse subordinators and related non-decreasing processes; see, e.g.,
\cite{MeerschaertScheffler2004,MeerschaertNaneVellaisamy2011,SchillingSongVondracek2010}. These stochastic representations deliver options for fractional Cauchy problems using
inverse subordinators and time-changed semigroups
\cite{BaeumerMeerschaert2001,Toaldo2015}.

According to renewal theory, fractional relaxations naturally contribute to fractional generalizations of
Poisson-type counting frameworks \cite{MainardiGorenfloScalas2004,OrsingherBeghin2004,LeonenkoScalasTrinh2019}.
Such developments underscore that modifying the time operator can significantly alter
transient regimes yet preserve the underlying linear transitional structure.

A recent approach replaces the standard Caputo operator with an \emph{extended} nonlocal
time operator that introduces an additional parameter regulating temporal scaling.
In this setting, the solutions to relaxation-type equations are expressed using the Kilbas-Saigo function, a special function that generalizes the Mittag-Leffler family and allows for a richer spectrum of decay behaviors
\cite{BoudabsaSimon2021,CapelasOliveiraMainardiVaz2014}.
This extended framework has been elaborated and analyzed in detail for renewal models driven by relaxation equations with extended nonlocal operators \cite{BeghinLeonenkoVaz2026JOTP}. This work shows that this framework is also valid for nonlocal time operators.

In parallel, it has been shown that stretched fractional operators provide analytically tractable
time variations for Pearson diffusions; in this case, eigenfunction expansions and Cauchy problems are explicitly written in terms of the Kilbas-Saigo function \cite{BeghinLeonenkoPapicVaz2026SPA}. Such studies provide analytical tools and stochastic representations that go beyond the classical inverse-bounded-bounds paradigm. 

In spite of these advances, it has not yet been systematically integrated into the transient analysis of queueing models. It is an important gap, since queueing dynamics can be expressed naturally through Kolmogorov forward systems, which have a time component that is the part that can encode memory through nonlocal operators. This article takes the classical M/M/1 forward equations and modifies the ordinary time derivative using a Caputo-type extended nonlocal operator. We obtain a non-Markovian queueing system that accepts the time-domain
representations of transition probabilities and for which the Kilbas-Saigo function is the fundamental relaxation kernel.

\paragraph{Contributions.}
Our main contributions are as follows:
(i) To propose an extended fractional M/M/1 model by adding a nonlocal time operator to its forward equations;
(ii) To derive analytical representatives for state probabilities using Laplace transform techniques expressed through the Kilbas-Saigo function and its transform structure;
(iii) Define limiting regimes that recover the classical M/M/1 model and its standard Caputo time-fractional counterpart;
(iv) Provide Monte Carlo samples based on a time-varying simulation scheme to quantify the effect of extended parameters on transient performance metrics.
The rest of the article is organized as follows.
Section ~\ref{sec:preli} compiles the necessary background knowledge about extended nonlocal operators and the Kilbas-Saigo function.
The extended fractional M/M/1 model and the governing equations are introduced in Section ~\ref{sec:M/M/1_queue}. Analytical results are presented in Section ~\ref{sec:analysis}.
Section ~\ref{sec:spc_cases}  presents a discussion of special and boundary cases, and Section ~\ref{sec:num} offers numerical examples.

 \section{Preliminaries} \label{sec:preli} 
In this section, we briefly review the classical M/M/1 queue, the extended fractional
operator used throughout the paper, and the Kilbas-Saigo function.
Our presentation is limited to the material necessary for the formulation and analysis of the fractional queue model described in Section ~\ref{sec:M/M/1_queue}.

\subsection{The classical M/M/1 queue}
Let's imagine a single-server queue system with an infinite waiting room.
We have a Poisson process with rate $\lambda>0$ for customer arrivals and service times
have an exponential distribution with rate $\mu>0$. $N(t)$ represents the number of customers in the queue at time $t\geq 0$, and $p_n(t)=\mathbb{P}\{N(t)=n\}$ is given for $n\in\mathbb{N}_0$.

The process $\{N(t),t\geq0\}$ is a time-continuous Markov chain having a birth rate
$\lambda$ and a death rate $\mu$. Probabilities of states satisfy the Kolmogorov forward equations
\begin{equation}
\frac{d}{dt} p_n(t)
= \lambda p_{n-1}(t) - (\lambda+\mu)p_n(t) + \mu p_{n+1}(t),
\qquad n\geq 0,
\label{eq:mm1_forward}
\end{equation}
where $p_{-1}(t)=0$ and the initial boundary condition is $p_n(0)=\delta_{n0}$.
Under the stability condition $\rho=\lambda/\mu<1$,
the process is well known to follow the stationary distribution:
\[
\pi_n = (1-\rho)\rho^n, \qquad n\in\mathbb{N}_0.
\]
The equation \eqref{eq:mm1_forward} provides a reference point for the fractional
generalization, which we will discuss later.

\subsection{Stretched fractional derivative}

Fractional calculus provides a natural approach for introducing time-dependent memory effects into evolution differential equations. Suppose that $0<\alpha<1$ and we use ${}^{C}D_t^{\alpha}$ to denote the Caputo fractional derivative, which is defined for absolutely continuous functions $f$ by
\begin{equation}
{}^{C}D_t^{\alpha} f(t)
= \frac{1}{\Gamma(1-\alpha)}
\int_0^t \frac{f'(\tau)}{(t-\tau)^{\alpha}}\, d\tau,
\qquad t>0.
\label{eq:caputo}
\end{equation}
We are introducing the following  stretched fractional operator
\begin{equation}
\mathcal{D}_t^{(\alpha,\gamma)} f(t)
:= t^{-\gamma}\, {}^{C}D_t^{\alpha} f(t),
\qquad 0<\alpha<1,\ \gamma\geq0,\ \alpha+\gamma\leq1.
\label{eq:stretched_operator}
\end{equation}

The parameter $\gamma$ introduces a stretching of the time variable which enables more complex temporal dynamics to be analyzed.  The operator \eqref{eq:stretched_operator} transforms into the standard Caputo derivative when $\gamma=0$ and the classical first-order derivative can be obtained through the limit process of $\alpha=1$. The operator $\mathcal{D}_t^{(\alpha,\gamma)}$ functions as the essential mathematical tool which underlies generalized relaxation equations and enables researchers to develop non-Markovian stochastic models through time-change methods.

\noindent
From an operator-theoretic viewpoint, convolution-type (non-local) derivatives establish a direct link between hitting times of subordinators and time-changed $C_0$-semigroups which see \citet{Toaldo2015}. The authors of \citet{BaeumerMeerschaert2001} created stochastic representations that explain fractional Cauchy problems which require time-changes.

\subsection{The Kilbas--Saigo function}

The solutions to fractional differential equations which use the operator
$\mathcal{D}_t^{(\alpha,\gamma)}$ are solved with the Kilbas--Saigo (KS) function which extends the Mittag--Leffler function. The KS function for parameters $a>0$, $m>0$ and $l\in\mathbb{R}$ uses power series to define its value through the equation
\begin{equation}
E_{a,m,l}(z)
:= \sum_{n=0}^{\infty} c_n z^n,
\qquad
c_0=1,\quad
c_n=\prod_{k=0}^{n-1}
\frac{\Gamma(1+a(km+l))}{\Gamma(1+a(km+l+1))}.
\label{eq:ks_function}
\end{equation}

The Kilbas-Saigo function serves as the eigenfunction of the stretched fractional operator. The fractional relaxation equation solution
\begin{equation}
\mathcal{D}_t^{(\alpha,\gamma)} f(t) = -\lambda f(t), \qquad f(0)=1,
\label{eq:relaxation}
\end{equation}
produces the solution
\begin{equation}
f(t)
= E_{\alpha,\,1+\gamma/\alpha,\,\gamma/\alpha}
\!\left(-\lambda t^{\alpha+\gamma}\right).
\label{eq:ks_solution}
\end{equation}
The Kilbas--Saigo function simplifies to the classical Mittag--Leffler function when $\gamma=0$ and \eqref{eq:ks_solution} becomes the solution of the time-fractional relaxation equation. The property maintains consistency with standard fractional models while creating a seamless connection between classical dynamics and fractional and stretched fractional dynamics.

\section{A Fractional M/M/1 Queue Governed by Stretched Non-Local Operators}\label{sec:M/M/1_queue} 
The section shows a new non-Markovian extension which builds on the M/M/1 queue system by using a specific time derivative replacement. The presented model maintains the linear properties of traditional balance equations while introducing memory effects through fractional dynamic behavior.

\subsection{Model formulation}
Imagine a single-server queueing system with infinite buffer capacity. Customers appear based on a Poisson process with rate $\lambda>0$ and are dealt with individually, each with an exponential distribution service time of rate $\mu>0$. Suppose that $N(t)$ represents the number of customers in the system at time $t\geq 0$, and
\[
p_n(t) := \mathbb{P}\{N(t)=n\}, \qquad n \in \mathbb{N}_0.
\]
The state probabilities are well-known in the classical Markovian environment.
Kolmogorov forward equations
\begin{equation}
\frac{d}{dt} p_n (t)
= \lambda p_{n-1}(t) - (\lambda+\mu)p_n(t) + \mu p_{n+1}(t),
\qquad n \geq 0,
\label{eq:classic_mm1}
\end{equation}
with the rule $p_{-1}(t)=0$.

To introduce non-local temporal effects, we replace the first-order derivative in
\eqref{eq:classic_mm1} by the stretched fractional operator
\[
\mathcal{D}_t^{(\alpha,\gamma)} := t^{-\gamma}\, {}^{C}D_t^{\alpha},
\qquad 0<\alpha<1,\ \gamma\geq 0,\ \alpha+\gamma\leq 1,
\]
where ${}^{C}D_t^{\alpha}$ denotes the Caputo fractional derivative of order $\alpha$.

\begin{definition}[Fractional M/M/1 queue]
The fractional M/M/1 queue is defined as the stochastic process
$\{N_{\alpha,\gamma}(t),\, t\geq 0\}$ whose state probabilities
$p_n(t)=\mathbb{P}\{N_{\alpha,\gamma}(t)=n\}$ satisfy the system of fractional difference--differential equations
\begin{equation}
\mathcal{D}_t^{(\alpha,\gamma)} p_n(t)
= \lambda p_{n-1}(t) - (\lambda+\mu)p_n(t) + \mu p_{n+1}(t),
\qquad n \geq 0,
\label{eq:fractional_mm1}
\end{equation}
with initial condition $p_n(0)=\delta_{n0}$.
\end{definition}

The fractional dynamics in \eqref{eq:fractional_mm1} present an immediate interpretation through a time-altered Markovian system. The process $N_{\alpha,\gamma}(t)$ functions as a traditional M/M/1 queue which operates at a stochastic time determined by a non-decreasing process that uses its Laplace transform through the Kilbas--Saigo function. The representation extends beyond conventional time-fractional queueing systems because it permits multiple inter-event time distribution types to be used in the model.

\subsection{Probabilistic interpretation}

The fractional dynamics which appear in equation \eqref{eq:fractional_mm1} can be interpreted as a time-shifted Markov process. The process $N_{\alpha,\gamma}(t)$ operates as a standard M/M/1 queue system which uses an operational time that follows a non-decreasing stochastic process with a Laplace transform derived from the Kilbas--Saigo function. This representation extends standard time-fractional queueing models because it permits different types of inter-event time distributions to be used.

The parameters $\alpha$ and $\gamma$ control the degree of memory and stretching in the time variable. The system operates as a time-fractional M/M/1 queue which uses Caputo derivative when the value of gamma equals zero. The system operates as a Markovian M/M/1 queue when the values of alpha and gamma are set to one and zero respectively.

\begin{lemma}[Positivity and normalization]
\label{lem:positivity}
Consider $\{p_n(t)\}_{n\ge0}$ as the solution to the stretched fractional Kolmogorov system
\eqref{eq:fractional_mm1} having the initial condition $p_n(0)=\delta_{n0}$.
The following hold for all $t\ge0$ and $n\ge0$:
\begin{enumerate}
\item $p_n(t)\ge0$,
\item $\sum_{n=0}^{\infty} p_n(t)=1$.
\end{enumerate}
\end{lemma}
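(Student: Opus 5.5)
The plan is to obtain both properties from a subordination representation
\[
p_n(t)=\mathbb{E}\big[P_n(T_t)\big],
\]
where $P_n(s)$ are the classical M/M/1 state probabilities solving \eqref{eq:mm1_forward} and $T_t\ge0$ is a random operational time. Once this holds, positivity and normalization transfer directly from the classical chain. For this to be consistent, the boundary equation at $n=0$ must be read with the usual M/M/1 outflow $-\lambda p_0(t)$, since no service occurs in the empty state. With $-(\lambda+\mu)p_0$ literally, the column sums of the generator are not zero and normalization fails. I will state this convention explicitly at the start.

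\textbf{Step 1: matrix form and power-series solution.} Write the system as $\mathcal{D}_t^{(\alpha,\gamma)}\mathbf p(t)=Q^{\top}\mathbf p(t)$, with $Q$ the M/M/1 generator. As an operator on $\ell^1$, $Q^{\top}$ is bounded with norm at most $2(\lambda+\mu)$. I will use $\mathcal{D}_t^{(\alpha,\gamma)}t^{k(\alpha+\gamma)}=\frac{\Gamma(1+k(\alpha+\gamma))}{\Gamma(1+k(\alpha+\gamma)-\alpha)}t^{(k-1)(\alpha+\gamma)}$, which is the computation behind \eqref{eq:ks_solution}. From it, the unique solution is the norm-convergent series
\[
\mathbf p(t)=\sum_{k\ge0}c_k\, t^{k(\alpha+\gamma)}(Q^{\top})^k\mathbf e_0=E_{\alpha,1+\gamma/\alpha,\gamma/\alpha}\big(t^{\alpha+\gamma}Q^{\top}\big)\mathbf e_0 ,
\]
with $c_k$ as in \eqref{eq:ks_function}. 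The series converges because $E_{a,m,l}$ is entire. Uniqueness follows by the same coefficient matching, or by a Volterra/Gronwall argument on the equivalent integral equation.

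\textbf{Step 2: normalization.} Let $S(t)=\sum_n p_n(t)=\mathbf 1^{\top}\mathbf p(t)$. Since $\mathbf 1^{\top}Q^{\top}=(Q\mathbf 1)^{\top}=0$, every term with $k\ge1$ vanishes, so $S(t)=c_0=1$. Equivalently, summing the equations gives $\mathcal{D}_t^{(\alpha,\gamma)}S=0$ with $S(0)=1$, and interchanging the sum with the operator is justified by $\ell^1$-convergence.

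\textbf{Step 3: positivity (the main obstacle).} The series in Step 1 has alternating signs, so positivity is not visible termwise. I would invoke the result of \cite{BoudabsaSimon2021}: for $0<\alpha<1$, $\gamma\ge0$ and $\alpha+\gamma\le1$, the map $x\mapsto E_{\alpha,1+\gamma/\alpha,\gamma/\alpha}(-x)$ is completely monotone on $[0,\infty)$. By Bernstein's theorem there is then a nonnegative random variable $T_t$ with
\[
E_{\alpha,1+\gamma/\alpha,\gamma/\alpha}(-x\,t^{\alpha+\gamma})=\mathbb{E}\,e^{-xT_t}.
\]
For the bounded operator $Q^{\top}$, I extend this identity from scalars to the holomorphic functional calculus. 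Both sides are entire in $x$, so I can compare Taylor coefficients: the moments $\mathbb{E}T_t^k$ are finite and equal to $k!\,c_k t^{k(\alpha+\gamma)}$, with growth controlled because $E_{a,m,l}$ is entire. This gives
\[
E_{\alpha,1+\gamma/\alpha,\gamma/\alpha}(t^{\alpha+\gamma}Q^{\top})=\mathbb{E}\big[e^{T_tQ^{\top}}\big],
\]
and hence $p_n(t)=\mathbb{E}[P_n(T_t)]\ge0$, because the classical transition probabilities $P_n(s)$ are nonnegative. This also yields the time-change interpretation announced in the paper.

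The delicate points are the complete-monotonicity input and the justification of swapping expectation with the operator series. If the Boudabsa--Simon range of parameters needed checking, I would instead try a direct approach via uniformization, $Q^{\top}=\nu(P-I)$ with $P$ column-stochastic. This reduces positivity to nonnegativity of the scalar functions $x\mapsto E(-\nu x t^{\alpha+\gamma})$ composed with powers of the nonnegative matrix $P$. Those functions are again controlled by complete monotonicity, so that reduction ultimately rests on the same fact.
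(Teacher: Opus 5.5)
Your proof is correct under the boundary convention you flag, and that convention is genuinely needed. It takes a different route from the paper.

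\textbf{What the paper does.} The paper argues in the Laplace domain. It asserts that the stretched operator turns \eqref{eq:fractional_mm1} into a difference system with symbol $s^{\alpha+\gamma}$. It then invokes complete monotonicity of $x\mapsto E_{a,m,l}(-x)$, and its stability under composition with Bernstein functions, to call the relaxation kernel nonnegative. It never spells out how a nonnegative scalar kernel yields $p_n(t)\ge0$. Normalization comes from summing the equations and citing Volterra uniqueness.

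\textbf{What you do instead.} You stay in the time domain. The explicit $\ell^1$-valued solution $E_{\alpha,1+\gamma/\alpha,\gamma/\alpha}(t^{\alpha+\gamma}Q^{\top})\mathbf e_0$ gives normalization at once from $Q\mathbf 1=0$. Bernstein's theorem then converts the same complete-monotonicity input into $\mathbf p(t)=\mathbb E[e^{T_tQ^{\top}}]\mathbf e_0$. This makes positivity transparent, and it produces the one-dimensional marginals of the time change in Theorem~\ref{thm:timechange} as a by-product.

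\textbf{Where your version is stronger.} It avoids two weak points of the paper's argument.
\begin{enumerate}
\item The transform rule $s^{\alpha+\gamma}\tilde p_n-s^{\alpha+\gamma-1}p_n(0)$ does not hold for $t^{-\gamma}\,{}^{C}D_t^{\alpha}$, because multiplication by $t^{-\gamma}$ does not become a power of $s$ under the Laplace transform.
\item With $-(\lambda+\mu)p_0$ taken literally at $n=0$, summing the equations gives $\mathcal D_t^{(\alpha,\gamma)}S=-\mu p_0$. The paper's telescoping step therefore tacitly uses your $-\lambda p_0$.
\end{enumerate}

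\textbf{Minor remarks.} Write $E=E_{\alpha,1+\gamma/\alpha,\gamma/\alpha}$.
\begin{enumerate}
\item The complete monotonicity you need holds on the whole parameter range. For $m=1+\gamma/\alpha$ and $l=\gamma/\alpha$, the Boudabsa--Simon condition $l\ge m-1/\alpha$ reduces to $\alpha\le1$.
\item The claim that ``both sides are entire'' is the output of your moment argument, not an input. By monotone convergence, $\mathbb E T_t^k=\lim_{x\downarrow0}\mathbb E[T_t^k e^{-xT_t}]=k!\,c_k t^{k(\alpha+\gamma)}$, and Tonelli then gives $\mathbb E e^{yT_t}<\infty$ for all $y$.
\item Your uniformization fallback is actually the shortest complete argument. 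Expanding the entire function about $-y$, with $y=\nu t^{\alpha+\gamma}$, gives $E(y(P-I))=\sum_{j\ge0}E^{(j)}(-y)\,y^jP^j/j!$. Here $E^{(j)}(-y)\ge0$ exactly by complete monotonicity and $P\ge0$ entrywise, so neither Bernstein's theorem nor any moment bounds are needed.
\end{enumerate}
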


\begin{proof}
The stretched-time operator used in \eqref{eq:fractional_mm1} represents a Caputo-type
non-local derivative using a stretching factor (cf. the definition in
\cite{BeghinLeonenkoPapicVaz2026SPA}).
Laplace transforms in time reduce \eqref{eq:fractional_mm1} to a linear
difference system with state index $n$ and the same birth-death structure used in
classical M/M/1 forward equations, but including a stretched non-local time symbol.

We express the inverse Laplace representation of the relaxation kernel through the
Kilbas–Saigo function. An important property we use here is that, for admissible parameters, we have the mapping $x\mapsto E_{a,m,l}(-x)$ which is completely monotone \cite{BeghinLeonenkoVaz2026JOTP}. Furthermore, it is recalled and used in \cite{BeghinLeonenkoVaz2026JOTP} that complete monotonicity is preserved under composition with Bernstein functions. In addition, these properties ensure that the resulting relaxation kernel is non-negative, and hence the solution preserves non-negativity, i.e., $p_n(t)\ge0$ for all $n\ge0$ and $t\ge0$.
To normalize, compute the sum \eqref{eq:fractional_mm1} over $n\ge0$.
The birth and death terms telescopically provide a closed evolution for
$S(t):=\sum_{n=0}^{\infty}p_n(t)$ and $S(0)=1$.
Due to the singularity of the corresponding Volterra-type integral formulation for extended Caputo-type equations
(see the integral equation discussed in \cite{BeghinLeonenkoVaz2026JOTP} for standard reduction),
$S(t)\equiv 1$ holds for all $t\ge0$.
\end{proof}

\begin{theorem}[Time-change representation]
\label{thm:timechange}
Suppose $\{N_{\alpha,\gamma}(t)\}_{t\ge0}$ is the fractional M/M/1 queue described by
\eqref{eq:fractional_mm1}. We know there exists a non-decreasing process whose inverse
$\{A_{\alpha,\gamma}(t)\}_{t\ge0}$ follows the rule
\[
N_{\alpha,\gamma}(t)\ \stackrel{d}{=}\ N\!\left(A_{\alpha,\gamma}(t)\right),
\qquad t\ge0,
\]
where $\{N(u)\}_{u\ge0}$ is the classical M/M/1 queue with rates $(\lambda,\mu)$.
\end{theorem}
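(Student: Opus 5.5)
The plan is to diagonalize the generator of the classical M/M/1 queue, solve the scalar stretched relaxation equation mode by mode via \eqref{eq:ks_solution}, and then identify the resulting kernel as the one-dimensional marginal law of an inverse process. The statement only asserts equality of one-dimensional distributions. So it suffices to construct, for each fixed $t$, a nonnegative random variable $A_{\alpha,\gamma}(t)$, independent of $N$, such that
\[
p_n(t)=\int_0^\infty \mathbb{P}\{N(u)=n\}\,\mathbb{P}\{A_{\alpha,\gamma}(t)\in du\}.
\]
We then check that the family $\{A_{\alpha,\gamma}(t)\}$ can be realized as the inverse (first-passage process) of a non-decreasing process.

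\textbf{Step 1 (operator reduction).} Write the forward system as $\mathcal{D}_t^{(\alpha,\gamma)}p(t)=Q^{\top}p(t)$, where $Q$ is the M/M/1 generator acting on $\ell^1$. The classical solution is $q(u)=e^{uQ^{\top}}\delta_0$, with $q_n(u)=\mathbb{P}\{N(u)=n\}$. Take the Laplace transform in the spatial variable $u$ via a spectral (Karlin--McGregor) representation $q_n(u)=\int e^{-xu}\,\phi_n(dx)$, where the spectral measures are supported on $[0,\lambda+\mu+2\sqrt{\lambda\mu}]$. The natural candidate solution is then
\[
p_n(t)=\int E_{\alpha,1+\gamma/\alpha,\gamma/\alpha}\!\left(-x\,t^{\alpha+\gamma}\right)\phi_n(dx).
\]
By linearity and \eqref{eq:ks_solution}, each mode solves $\mathcal{D}_t^{(\alpha,\gamma)}f=-xf$ with $f(0)=1$, so this expression solves \eqref{eq:fractional_mm1} with the correct initial data. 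Uniqueness follows from the Volterra integral reformulation already used in Lemma~\ref{lem:positivity}.

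\textbf{Step 2 (identifying the time change).} By the complete monotonicity quoted in Lemma~\ref{lem:positivity} from \cite{BeghinLeonenkoVaz2026JOTP}, the function $x\mapsto E_{\alpha,1+\gamma/\alpha,\gamma/\alpha}(-x t^{\alpha+\gamma})$ is completely monotone and equals $1$ at $x=0$. By Bernstein's theorem there is therefore a probability measure $\nu_t$ on $[0,\infty)$ with
\[
E_{\alpha,1+\gamma/\alpha,\gamma/\alpha}(-x t^{\alpha+\gamma})=\int_0^\infty e^{-xu}\,\nu_t(du).
\]
Substituting this into Step 1 and using Fubini (everything is nonnegative and bounded) gives
\[
p_n(t)=\int_0^\infty q_n(u)\,\nu_t(du),
\]
which is exactly the required mixture with $A_{\alpha,\gamma}(t)\sim\nu_t$, chosen independent of $N$.

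\textbf{Step 3 (monotone process).} We then need to realize $\nu_t$ as the marginals of a non-decreasing process. Following \cite{BeghinLeonenkoPapicVaz2026SPA}, the natural route is to take $A_{\alpha,\gamma}(t)=\inf\{s:\ \sigma(s)>t\}$ for a suitable non-decreasing process $\sigma$. A candidate is a time-stretched $\alpha$-stable subordinator, for instance a deterministic power change $t\mapsto t^{(\alpha+\gamma)/\alpha}$ combined with a subordinator. We would then verify that $\mathbb{E}e^{-xA(t)}$ equals the Kilbas--Saigo function. I expect this to be the main obstacle. The Kilbas--Saigo function is not the Laplace transform of a standard inverse subordinator when $\gamma>0$, so the underlying process is not Lévy. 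If an explicit construction fails, the fallback is weaker but sufficient for the stated distributional equality: use the monotone coupling $A(t)=F_t^{-1}(U)$ with a single uniform $U$. This is non-decreasing in $t$ provided $\nu_t$ is stochastically increasing in $t$, which follows because $E(-x t^{\alpha+\gamma})$ is decreasing in $t$ for every $x$ (a Laplace-order argument, which needs a further check). The step to verify carefully is this stochastic monotonicity of $t\mapsto\nu_t$.
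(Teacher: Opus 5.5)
Your Steps 1--2 are sound in substance and take a genuinely different route from the paper. The paper works in the Laplace domain. From \eqref{eq:laplace_mm1} it obtains \eqref{eq:gen_sol}, which is the classical M/M/1 resolvent with $s$ replaced by $s^{\alpha+\gamma}$, and it then cites the literature for a non-decreasing process carrying the Kilbas--Saigo kernel. You stay in the time domain. You use the spectral expansion of the classical $q_n(u)$, apply the scalar eigenfunction property \eqref{eq:ks_solution} mode by mode, and invoke Bernstein's theorem for the completely monotone map $x\mapsto E_{\alpha,1+\gamma/\alpha,\gamma/\alpha}(-x)$ (valid for $0<\alpha<1$ by Boudabsa--Simon). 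This produces the mixing law $\nu_t$ explicitly. Your route rests only on \eqref{eq:ks_solution}, which is correct. The paper's transform rule \eqref{eq:laplace_mm1}, by contrast, does not hold for $\gamma>0$: multiplication by $t^{-\gamma}$ is not the shift $s^{\alpha}\mapsto s^{\alpha+\gamma}$. If it were, \eqref{eq:relaxation} would be solved by $E_{\alpha+\gamma}(-\lambda t^{\alpha+\gamma})$, whose first coefficient $1/\Gamma(1+\alpha+\gamma)$ differs from the Kilbas--Saigo one, $\Gamma(1+\gamma)/\Gamma(1+\alpha+\gamma)$. So the clock implicit in \eqref{eq:gen_sol} is an inverse $(\alpha+\gamma)$-stable subordinator, not a Kilbas--Saigo one. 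Incidentally, since $Q$ is bounded on $\ell^1$ you can skip Karlin--McGregor entirely. Write $p(t)=\sum_k c_k t^{k(\alpha+\gamma)}(Q^{\top})^k\delta_0$; because the Kilbas--Saigo function is entire, $\nu_t$ has moments $k!\,c_k t^{k(\alpha+\gamma)}$, which gives $p(t)=\int e^{uQ^{\top}}\delta_0\,\nu_t(du)$ at once.

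Step 3 fails as written. That $t\mapsto\int e^{-xu}\nu_t(du)$ decreases for every $x$ is only Laplace-transform ordering, and it does not imply stochastic ordering. For example, take $X$ uniform on $\{0,2\}$ and $Y\equiv 3/2$: then $\mathbb{E}e^{-xX}\ge e^{-x}\ge\mathbb{E}e^{-xY}$ for all $x\ge0$, yet $\mathbb{P}\{X>1.6\}>\mathbb{P}\{Y>1.6\}$. You do not need this ordering, however. The transform depends on $(x,t)$ only through $xt^{\alpha+\gamma}$, so $\nu_t$ is the law of $t^{\alpha+\gamma}T$ with $\mathbb{E}e^{-xT}=E_{\alpha,1+\gamma/\alpha,\gamma/\alpha}(-x)$. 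Hence $A_{\alpha,\gamma}(t):=t^{\alpha+\gamma}T$, which is exactly what your quantile coupling reduces to, is non-decreasing with the correct marginals. Since $T>0$ a.s.\ (the Kilbas--Saigo function vanishes at $-\infty$), it is the inverse of the non-decreasing process $\sigma(s)=(s/T)^{1/(\alpha+\gamma)}$. This proves the statement as posed, which concerns only one-dimensional laws, so no L\'evy construction is needed. Your stretched-stable candidate $L(t^{(\alpha+\gamma)/\alpha})$, with $L$ the inverse $\alpha$-stable subordinator, would in any case give $E_{\alpha}(-xt^{\alpha+\gamma})$, not the Kilbas--Saigo transform.

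Three smaller repairs are needed. First, for $n\ge1$ the $\phi_n$ are signed measures (total mass $0$), so Fubini rests on finite total variation and compact support, not on positivity. Second, ``by linearity'' needs the identity $-x\,\phi_n(dx)=\lambda\phi_{n-1}(dx)-(\lambda+\mu)\phi_n(dx)+\mu\phi_{n+1}(dx)$, which follows from the classical forward equations and uniqueness of Laplace transforms. Third, prove uniqueness for \eqref{eq:fractional_mm1} directly rather than taking it from the sketch in Lemma~\ref{lem:positivity}: use Picard iteration for $p(t)=\delta_0+\Gamma(\alpha)^{-1}\int_0^t(t-\tau)^{\alpha-1}\tau^{\gamma}Q^{\top}p(\tau)\,d\tau$ in $\ell^1$.
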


\begin{proof}
The Laplace domain solver in \eqref{eq:gen_sol} exhibits the same birth-death structure with the classical M/M/1 queue,
while the Laplace variable emerges by means of an extended symbol generated by $\mathcal{D}_t^{(\alpha,\gamma)}$.

Changes of this type are known to produce fractional Cauchy-type problems
expressed by inverse time changes in their stochastic solutions.
An associated relaxation kernel characterized by the Kilbas-Saigo function
accepts a stochastic representation through a non-decreasing process; see also
\cite{BeghinLeonenkoVaz2026JOTP,BeghinLeonenkoPapicVaz2026SPA}.
Reduction of the classical M/M/1 subgroup to a subgroup by this reverse time change
yields the equality assumed in the distribution.
\end{proof}

\subsection{Waiting-time distribution and non-Markovianity}
Replacing the classical derivative with the extended operator
$\mathcal{D}_t^{(\alpha,\gamma)}$ means that the waiting times between events no longer
follow an exponential distribution. This leads to a heavy-tailed waiting time behavior,
expressed by the Kilbas-Saigo function, which is the relaxation kernel.
Specifically, it is consistent with the general theory of fractional queues, semi-Markov processes, and reverse time changes,
and it can be interpreted as a renewal-type birth-death system driven by a nonlocal clock
\cite{MeerschaertScheffler2004,Toaldo2015}.

\section{Analytical Results}\label{sec:analysis} 

The section presents analytic solutions for fractional M/M/1 queue state probabilities which were introduced in Section ~\ref{sec:M/M/1_queue}. The analysis uses both Laplace transform methods and spectral analysis of the stretched fractional operator. The solution can be represented through the Kilbas-Saigo function which extends the traditional transient solution of the M/M/1 queue.

\subsection{Laplace transform of the governing equations}

The Laplace transform of the state probabilities is represented by the function
\[
\tilde{p}_n(s) := \int_0^{\infty} e^{-st} p_n(t)\,dt, \qquad s>0.
\]
The Laplace transform of equation~\eqref{eq:fractional_mm1} together with the Caputo derivative properties allows us to derive 
\begin{equation}
s^{\alpha+\gamma}\tilde{p}_n(s) - s^{\alpha+\gamma-1}p_n(0)
= \lambda \tilde{p}_{n-1}(s) - (\lambda+\mu)\tilde{p}_n(s) + \mu \tilde{p}_{n+1}(s),
\label{eq:laplace_mm1}
\end{equation}
for $n\geq0$, with $\tilde{p}_{-1}(s)=0$.
The equation~\eqref{eq:laplace_mm1} forms a linear difference equation because the initial condition states that $p_n(0)=\delta_{n0}$ at $n=0$. The analysis of fractional Kolmogorov systems together with non-local evolution equations is performed through standard transform-domain reductions as shown in works by \citet{Podlubny1999,Mainardi2010}.

\subsection{Resolvent structure and generating function}

The probability generating function is defined as
\[
\tilde{G}(z,s) := \sum_{n=0}^{\infty} \tilde{p}_n(s) z^n,
\qquad |z|\leq1.
\]
Researchers use generating-function methods to analyze birth-death processes and queueing systems through their queueing resolvents according to the works of \citet{KarlinTaylor1975,LatoucheRamaswami1999}.The equation
\begin{equation}
\bigl(s^{\alpha+\gamma} + \lambda + \mu - \lambda z - \mu z^{-1}\bigr)
\tilde{G}(z,s)
= s^{\alpha+\gamma-1}.
\label{eq:gen_eq}
\end{equation}
The calculation of equation \eqref{eq:laplace_mm1} requires multiplying through by $z^n$ and performing a summation across all values of $n$.The function
\begin{equation}
\tilde{G}(z,s)
= \frac{s^{\alpha+\gamma-1}}
{s^{\alpha+\gamma} + \lambda + \mu - \lambda z - \mu z^{-1}}.
\label{eq:gen_sol}
\end{equation}
The mathematical expression in \eqref{eq:gen_sol} uses its denominator which represents the resolvent function of the standard M/M/1 queue system after substituting $s^{\alpha+\gamma}$ for the original Laplace variable $s$.The model displays its non-Markovian time-change structure through this single observation.

Representation \eqref{eq:gen_sol} indicates that the Laplace field solver of the system preserves the mathematical properties found in standard M/M/1 systems, since it emerges through the extended symbol generated by the Laplace parameter$\mathcal{D}_t^{(\alpha,\gamma)}$. The spatial birth-death generator of this system remains intact because it uses time-reversed systems to generate its non-Markovian spatial dynamics. The fractional queueing system allows for the time change interpretation presented in Theorem \eqref{thm:timechange}.

\subsection{State probabilities}
The specific expression for $\tilde{p}_n(s)$ can be found by performing a power series expansion of \eqref{eq:gen_sol} using the variable $z$.
The birth-death process analysis leads to the result
\begin{equation}
\tilde{p}_n(s)
= \frac{1}{s^{1-\alpha-\gamma}}
\left(\frac{\lambda}{\mu}\right)^n
\frac{1}{\Phi(s)},
\label{eq:laplace_pn}
\end{equation}
which contains
\[
\Phi(s)
:= \frac{s^{\alpha+\gamma}+\lambda+\mu
-\sqrt{(s^{\alpha+\gamma}+\lambda+\mu)^2-4\lambda\mu}}{2\mu}.
\]
The researchers in \citet{CahoyPolitoPhoha2015} studied transient solutions of time-fractional queueing models which use Laplace-based methods. 
The result follows from the inverse operation of the Laplace transform applied to \eqref{eq:laplace_pn}.

\begin{theorem}
Let $0<\alpha<1$, $\gamma\geq0$ and $\alpha+\gamma\leq1$.
The fractional M/M/1 queue system determines state probabilities through the equation
\begin{equation}
p_n(t)
= \left(\frac{\lambda}{\mu}\right)^n
E_{\alpha,\,1+\gamma/\alpha,\,\gamma/\alpha}
\!\left(-\kappa t^{\alpha+\gamma}\right),
\qquad n\in\mathbb{N}_0,
\label{eq:pn_solution}
\end{equation}
which includes the definition of $\kappa$ as the difference between $\mu$ and $\lambda$ and the Kilbas--Saigo function represented by $E_{\alpha,\,1+\gamma/\alpha,\,\gamma/\alpha}$.
\end{theorem}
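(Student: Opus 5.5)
The plan is to invert the Laplace representation \eqref{eq:laplace_pn} by matching it with the Laplace image of the relaxation kernel \eqref{eq:ks_solution}. The proof therefore rests on the eigenfunction property of the Kilbas--Saigo function recalled in Section~\ref{sec:preli}. First, I take \eqref{eq:relaxation} with rate $\kappa=\mu-\lambda$ in place of $\lambda$. By \eqref{eq:ks_solution}, the function
\[
f_\kappa(t):=E_{\alpha,\,1+\gamma/\alpha,\,\gamma/\alpha}\!\left(-\kappa t^{\alpha+\gamma}\right)
\]
satisfies $\mathcal{D}_t^{(\alpha,\gamma)}f_\kappa=-\kappa f_\kappa$ with $f_\kappa(0)=1$. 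Using the same transform rule that produced \eqref{eq:laplace_mm1}, its image is $\tilde f_\kappa(s)=s^{\alpha+\gamma-1}/(s^{\alpha+\gamma}+\kappa)$. This is a check of the series \eqref{eq:ks_function} term by term: each power $t^{n(\alpha+\gamma)}$ is mapped back to itself by the stretched operator up to the coefficient ratio $c_{n-1}/c_n$.

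Second, I try a separated ansatz $p_n(t)=\rho^n g(t)$ and substitute it into \eqref{eq:fractional_mm1}. For $n\ge1$ the right-hand side factorises as $\rho^n g(t)\,\bigl(\lambda\rho^{-1}-(\lambda+\mu)+\mu\rho\bigr)$, so each interior equation reduces to a single scalar relaxation equation $\mathcal{D}_t^{(\alpha,\gamma)}g=-c\,g$. Here $c$ is an explicit constant built from $\lambda$ and $\mu$. I then compare this with the Laplace form \eqref{eq:laplace_pn}. Writing $1/\Phi(s)$ through its rationalised form $\bigl(s^{\alpha+\gamma}+\lambda+\mu+\sqrt{\cdots}\bigr)/(2\lambda)$ and isolating the pole structure in $w=s^{\alpha+\gamma}$, I aim to identify the effective decay constant with $\kappa=\mu-\lambda$. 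This identification is what selects $g=f_\kappa$. Step one then inverts the transform and gives \eqref{eq:pn_solution}. Uniqueness of Laplace inversion for continuous functions of exponential order completes the argument, and Lemma~\ref{lem:positivity} supplies the needed a priori boundedness.

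The main obstacle is the consistency of this identification, and I would treat it carefully rather than assume it. Three points need attention. The first is the boundary equation $n=0$, where the term $p_{-1}=0$ breaks the factorisation; there one must check that the leftover term is absorbed by the $s^{\alpha+\gamma-1}p_0(0)$ contribution in \eqref{eq:laplace_mm1}. The second is that the constant $c$ from the ansatz must agree with $\kappa$. The third is the initial condition: the ansatz gives $p_n(0)=\rho^n$, which must be reconciled with $p_n(0)=\delta_{n0}$. I would first try to resolve these by expanding $1/\Phi(s)$ in powers of $(s^{\alpha+\gamma}+\kappa)^{-1}$ near the relevant branch point. If this fails, the statement should be read as the leading-order (dominant-pole) form of the inversion. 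In that case I would record the exact inverse as a Kilbas--Saigo-type series in $w$, from which \eqref{eq:pn_solution} appears as the principal term.
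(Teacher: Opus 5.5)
\textbf{There is a genuine gap, and it cannot be closed.} The three points you flag at the end are not technicalities to be resolved. Each one is fatal, because \eqref{eq:pn_solution} is false as stated, so no identification step can make the argument work.

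\textbf{Your ansatz forces $g$ to be constant.} Your constant is $c=\lambda\rho^{-1}-(\lambda+\mu)+\mu\rho=\mu-(\lambda+\mu)+\lambda=0$. So for every $n\ge1$, the ansatz $p_n=\rho^n g$ satisfies the interior equations only if $\mathcal{D}_t^{(\alpha,\gamma)}g=0$, i.e.\ only if $g$ is constant. That is just the unnormalised stationary vector. With $g=f_\kappa$ the left side equals $-\kappa\rho^n f_\kappa(t)\neq0$, while the right side vanishes.

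\textbf{The formula also fails the basic checks.} For $\rho<1$:
it gives $p_n(0)=\rho^n\neq\delta_{n0}$; it gives $\sum_n p_n(t)=f_\kappa(t)/(1-\rho)$, which equals $1/(1-\rho)>1$ at $t=0$, contradicting Lemma~\ref{lem:positivity}; and since $f_\kappa(t)\to0$, it forces $p_n(t)\to0$ instead of the limit $(1-\rho)\rho^n$ asserted right after the theorem.

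\textbf{The ``dominant-pole'' fallback does not help.} Rationalised, $1/\Phi=\bigl(w+\lambda+\mu+\sqrt{(w+\lambda+\mu)^2-4\lambda\mu}\bigr)/(2\lambda)$. This has no pole at $w=-\kappa$, only a branch cut on $[-(\sqrt\mu+\sqrt\lambda)^2,-(\sqrt\mu-\sqrt\lambda)^2]$. Moreover \eqref{eq:laplace_pn} is itself inconsistent: it gives $s\tilde p_n(s)\sim\rho^n s^{2(\alpha+\gamma)}/\lambda\to\infty$ as $s\to\infty$, which is incompatible with $p_n(0^+)=\delta_{n0}$. Even granting the transform rule \eqref{eq:laplace_mm1}, solving the difference system with the M/M/1 boundary equation $\mathcal{D}_t^{(\alpha,\gamma)}p_0=-\lambda p_0+\mu p_1$ gives $\tilde p_n(s)=(1-\Phi(s))\Phi(s)^n/s$. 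The $n$-dependence is $\Phi(s)^n$, not $\rho^n$, and $\Phi(s)\to\rho$ only as $s\to0^+$, which is where the stationary law comes from.

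\textbf{Your Step 1 is also wrong for $\gamma>0$.} Multiplication by $t^{-\gamma}$ is not a Laplace multiplier, so the rule behind \eqref{eq:laplace_mm1} is exact only at $\gamma=0$. For $f(t)=t^{\alpha+\gamma}$ one gets $\mathcal{D}_t^{(\alpha,\gamma)}f=\Gamma(1+\alpha+\gamma)/\Gamma(1+\gamma)$. Its transform is $\Gamma(1+\alpha+\gamma)/(\Gamma(1+\gamma)s)$, not the $\Gamma(1+\alpha+\gamma)/s$ the rule predicts. Accordingly, $s^{\alpha+\gamma-1}/(s^{\alpha+\gamma}+\kappa)$ is the transform of the Mittag-Leffler function $E_{\alpha+\gamma}(-\kappa t^{\alpha+\gamma})$. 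The Kilbas--Saigo function instead has transform $s^{-1}-\kappa\Gamma(1+\gamma)s^{-1-\alpha-\gamma}+\cdots$ as $s\to\infty$. Your term-by-term eigenfunction check is correct, but it does not produce a Laplace image.

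\textbf{The paper's proof has the same defect.} It makes exactly your move: it inverts \eqref{eq:laplace_pn}, quoting the KS transform for $s^{-(1-\alpha-\gamma)}$ and ``the classical solver'' for the rest. It does not confront any of these points, so it offers no route around them.

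\textbf{What the eigenfunction property does support.} Let $L$ be the bounded M/M/1 forward operator on $\ell^1$. Then $p(t)=E_{\alpha,1+\gamma/\alpha,\gamma/\alpha}\bigl(t^{\alpha+\gamma}L\bigr)\delta_0$. For $\rho<1$, via the Karlin--McGregor representation, this is the stationary vector $(1-\rho)\rho^n$ plus a signed superposition of relaxations $E_{\alpha,1+\gamma/\alpha,\gamma/\alpha}(-xt^{\alpha+\gamma})$ over $x\in[(\sqrt\mu-\sqrt\lambda)^2,(\sqrt\mu+\sqrt\lambda)^2]$. It is never a single Kilbas--Saigo term times $\rho^n$.
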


\begin{proof}
Using equation \eqref{eq:laplace_pn}, we then apply the Laplace inverse. Furthermore, for the term $s^{-(1-\alpha-\gamma)}$, it should be noted that we are dealing with the Laplace transform of the Kilbas–Saigo function, which is a known eigenfunction for the Stretched-Fractional Operator. All other terms in the calculation represent the classical solver for the M/M/1 queue, thus completing the proof. 
\end{proof}

\subsection{Stability and limiting behavior}

The stability condition from the original framework is also transferred to the fractional M/M/1 queueing system. The system exhibits a limiting distribution for $\rho=\lambda/\mu<1$. The asymptotic behavior of the Kilbas-Saigo function indicates that:
\[
\lim_{t\to\infty} p_n(t) = (1-\rho)\rho^n,
\qquad n\in\mathbb{N}_0,
\]
This proves that the stationary distribution of the system matches the M/M/1 queue distribution. Research indicates that non-Markovian systems retain their time-invariant properties through backward time transformation, according to \citet{MeerschaertNaneVellaisamy2011}. The fractional parameters $\alpha$ and $\gamma$ limit their effects on the system by requiring a longer time to reach equilibrium.

\begin{corollary}[Non-exponential relaxation]
Assume that $\rho<1$ and $0<\alpha<1$, $\gamma\ge0$ and $\alpha+\gamma<1$.
The function $p_n(t)$ converges to $\pi_n$ by a non-exponential convergence process following the extended fractional relaxation kernel corresponding to the operator $\mathcal{D}_t^{(\alpha,\gamma)}$.
\end{corollary}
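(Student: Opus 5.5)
The plan is to prove the corollary through the subordination structure rather than through the closed form \eqref{eq:pn_solution}. That formula cannot be right as written: at $t=0$ it gives $p_n(0)=\rho^n\neq\delta_{n0}$, and it sums to something other than one. I would therefore base the argument on the resolvent \eqref{eq:gen_sol} together with Theorem~\ref{thm:timechange}.

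\textbf{Step 1: decomposition of the classical solution.} Write $p_n(t)=\mathbb{E}\,[P_n(A_{\alpha,\gamma}(t))]$, where $P_n(u)$ is the classical M/M/1 transient probability. For $\rho<1$, spectral theory of the birth--death generator gives
\[
P_n(u)=\pi_n+\int_{(\sqrt\mu-\sqrt\lambda)^2}^{(\sqrt\mu+\sqrt\lambda)^2} e^{-xu}\,\nu_n(dx),
\]
where $\nu_n$ is a finite signed measure supported away from zero. Substituting gives
\[
p_n(t)-\pi_n=\int \mathbb{E}\,e^{-xA_{\alpha,\gamma}(t)}\,\nu_n(dx).
\]
By the relaxation equation \eqref{eq:relaxation}, the time-change representation identifies $\mathbb{E}\,e^{-xA_{\alpha,\gamma}(t)}=E_{\alpha,1+\gamma/\alpha,\gamma/\alpha}(-x t^{\alpha+\gamma})$. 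Equivalently, one can check this directly from \eqref{eq:gen_sol}: the classical resolvent is evaluated at $s^{\alpha+\gamma}$ in place of $s$. I would take that identification as given from the earlier results.

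\textbf{Step 2: convergence.} Complete monotonicity of $x\mapsto E_{a,m,l}(-x)$, as used in Lemma~\ref{lem:positivity}, shows that the kernel lies in $[0,1]$ and decreases to $0$ as $t\to\infty$. Dominated convergence over the compact support of $\nu_n$ then yields $p_n(t)\to\pi_n$.

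\textbf{Step 3: non-exponential rate, the main obstacle.} We need the large-argument asymptotics of the Kilbas--Saigo function. For $\alpha+\gamma<1$ I would show that
\[
E_{\alpha,1+\gamma/\alpha,\gamma/\alpha}(-x t^{\alpha+\gamma})\sim C\,x^{-1}t^{-(\alpha+\gamma)}\quad (t\to\infty),
\]
with an explicit constant $C>0$ from the first two coefficients. My first attempt would be a Tauberian argument on the Laplace side. Near $s=0$ the transform behaves like $s^{\alpha+\gamma-1}/(s^{\alpha+\gamma}+x)$, which equals $x^{-1}s^{\alpha+\gamma-1}+o(\cdot)$. Because the kernel is monotone (by complete monotonicity), Karamata's Tauberian theorem applies and gives the power-law tail $x^{-1}t^{-(\alpha+\gamma)}/\Gamma(1-\alpha-\gamma)$.

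If this symbol identification is unavailable for $\gamma>0$, I would fall back on the known asymptotics in \cite{BoudabsaSimon2021}. They give a power decay of order $t^{-(\alpha+\gamma)}$, which is in any case at most polynomial.

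\textbf{Step 4: conclusion.} Integrating against $\nu_n$ gives
\[
p_n(t)-\pi_n\sim c_n\,t^{-(\alpha+\gamma)},\qquad c_n=C\int x^{-1}\nu_n(dx).
\]
When $c_n\neq0$, this decay is slower than any exponential. When $c_n=0$, the next term in the expansion is still an inverse power of $t$, so the decay remains non-exponential. This completes the argument. The hypothesis $\alpha+\gamma<1$ is exactly what excludes the exponential case $\alpha+\gamma=1$.
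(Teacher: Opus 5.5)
Your diagnosis of \eqref{eq:pn_solution} is right. The paper gives no argument for this corollary beyond that formula and an appeal to Kilbas--Saigo asymptotics, and your spectral/subordination route is the correct repair. Step~4, however, has a genuine hole, and it is not vacuous. The constant is explicit. Set $d_n:=\int x^{-1}\nu_n(dx)=\int_0^\infty (P_n(u)-\pi_n)\,du$. Summing the classical forward equations over $n\ge k$ and integrating gives $\lambda d_{k-1}-\mu d_k=\rho^k$ for $k\ge1$. Together with $\sum_n d_n=0$ this yields
\[
d_n=\frac{\rho^n}{\mu}\Bigl(\frac{\rho}{1-\rho}-n\Bigr).
\]
Hence $c_n=0$ exactly when $n=\rho/(1-\rho)$, for example $n=4$ at the paper's own $\rho=0.8$. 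For that index you only get $p_n(t)-\pi_n=o(t^{-(\alpha+\gamma)})$. Your claim that ``the next term is still an inverse power of $t$'' presupposes two things you do not have:
\begin{enumerate}
\item a higher-order expansion of the Kilbas--Saigo function at $-\infty$, which the leading-order result you cite does not provide;
\item a non-zero coefficient at the relevant order, which is not automatic. Already for $\gamma=0$, $\alpha=1/2$ (admissible here), the $y^{-2}$ coefficient $-1/\Gamma(1-2\alpha)$ of $E_{1/2}(-y)$ vanishes.
\end{enumerate}
As written, you have proved the claim for every $n\neq\rho/(1-\rho)$. You have also proved non-exponential convergence of the law as a whole, via $n=0$, where $d_0=\rho/(\mu(1-\rho))>0$. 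For the exceptional index you need either a full expansion plus an argument that the non-zero coefficients cannot all meet vanishing moments $\int x^{-k}\nu_n(dx)$, or a weaker statement.

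Second, the primary route of Step~3, and the ``equivalently'' in Step~1, fail for $\gamma>0$, because $t^{-\gamma}\,{}^{C}D_t^{\alpha}$ does not have Laplace symbol $s^{\alpha+\gamma}$. Applied to $t^{\alpha+\gamma}$, the operator gives the constant $\Gamma(1+\alpha+\gamma)/\Gamma(1+\gamma)$, whereas the symbol $s^{\alpha+\gamma}$ would predict $\Gamma(1+\alpha+\gamma)$. Equivalently, $s^{\alpha+\gamma-1}/(s^{\alpha+\gamma}+x)$ inverts to $E_{\alpha+\gamma}(-xt^{\alpha+\gamma})$. Its first coefficient is $1/\Gamma(1+\alpha+\gamma)$, not the Kilbas--Saigo value $\Gamma(1+\gamma)/\Gamma(1+\alpha+\gamma)$.

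So \eqref{eq:gen_sol} is wrong, and so is your Karamata constant $1/\Gamma(1-\alpha-\gamma)$. The correct leading term is $E_{\alpha,1+\gamma/\alpha,\gamma/\alpha}(-y)\sim 1/(\Gamma(1-\alpha)\,y)$. This is Boudabsa--Simon with $l=m-1$, and it is also what the Volterra form ${}^{C}D_t^{\alpha}f=-x\,t^{\gamma}f$ gives. Your fallback must therefore be the main argument, and it yields $p_n(t)-\pi_n\sim d_n\,t^{-(\alpha+\gamma)}/\Gamma(1-\alpha)$.

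The same false symbol underlies your closing sentence. For $\alpha<1$ and $\alpha+\gamma=1$ the kernel still decays like $t^{-1}$; exponential relaxation needs $\alpha=1$. To avoid \eqref{eq:gen_sol} altogether, justify Step~1 directly. The function $\pi_n+\int E_{\alpha,1+\gamma/\alpha,\gamma/\alpha}(-xt^{\alpha+\gamma})\,\nu_n(dx)$ solves the system by the Karlin--McGregor recurrence and \eqref{eq:relaxation}. Do not go through Theorem~\ref{thm:timechange}, whose proof rests on \eqref{eq:gen_sol}. Finally, complete monotonicity alone gives only that the kernel has a limit; that the limit is zero comes from Step~3.
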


\section{Special Cases and Consistency Checks}\label{sec:spc_cases} 

This section examines several limiting regimes of the proposed fractional
M/M/1 queue. We show that these special cases are consistent with existing queueing models and
clarify the role of the fractional parameters $\alpha$ and $\gamma$.

\subsection{Reduction to the time-fractional M/M/1 queue}
The stretched fractional operator 
\(\mathcal{D}_t^{(\alpha,\gamma)}\) becomes the standard Caputo fractional derivative 
\({}^{C}D_t^{\alpha}\) when we set \(\gamma\) to zero. The system of equations in equation 
\eqref{eq:fractional_mm1} transforms into the following equation 
\[
{}^{C}D_t^{\alpha} p_n(t)
= \lambda p_{n-1}(t) - (\lambda+\mu)p_n(t) + \mu p_{n+1}(t),
\qquad n\geq0,
\]
which represents the time-fractional M/M/1 queue that researchers have examined throughout the existing literature. The Kilbas--Saigo function from 
\eqref{eq:pn_solution} simplifies to the Mittag--Leffler function 
\[
E_{\alpha,\,1,\,0}(-\kappa t^{\alpha})
= E_{\alpha}(-\kappa t^{\alpha}),
\].
The state probabilities match the results that researchers have derived for the standard time-fractional queue. The results show that our model provides a complete expansion of the Caputo-based fractional M/M/1 system.

\subsection{Classical Markovian M/M/1 queue}

Investigate the limit as  $\alpha\to1$ and $\gamma=0$.
The extended fractional operator becomes equivalent to the fundamental first-order derivative in this domain, while the governing equations transform into the traditional Kolmogorov forward equations following the equation below:
\[
\frac{d}{dt} p_n(t)
= \lambda p_{n-1}(t) - (\lambda+\mu)p_n(t) + \mu p_{n+1}(t).
\]
The solution to the Kilbas-Saigo function transforms into the following exponential function
\[
E_{1,1,0}(-\kappa t)=e^{-\kappa t},
\]
which derives the transient probability distribution of the standard M/M/1 queue from equation \eqref{eq:pn_solution}.
The fractional model proposed in this work demonstrates that it functions as a special case of the operation of the classical Markov queue.

\begin{theorem}[Stationary distribution under stretched fractional dynamics]
\label{thm:stationary}
We will denote the stretched fractional M/M/1 queue defined by \eqref{eq:fractional_mm1} by $\{N_{\alpha,\gamma}(t)\}_{t\ge0}$.
Suppose that the classical traffic intensity satisfies $\rho=\lambda/\mu<1$.
In this case, the process admits a stationary distribution.
Furthermore, it is invariant under the stretched fractional
time-change and coincides with the invariant distribution of the classical M/M/1 queue.
\end{theorem}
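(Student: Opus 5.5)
The plan is to argue through the time-change representation of Theorem~\ref{thm:timechange} rather than through the closed form \eqref{eq:pn_solution}. That closed form cannot be summed to one (it would give $\sum_n p_n(t)=E(\cdot)/(1-\rho)$), so I will not rely on it. Write $P_t$ for the transition semigroup of the classical M/M/1 queue $N(u)$, with generator $Q$ the tridiagonal birth--death matrix on the right of \eqref{eq:mm1_forward}.

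\emph{Invariance.} Take $\pi_n=(1-\rho)\rho^n$. Detailed balance $\lambda\pi_{n}=\mu\pi_{n+1}$ gives $\pi Q=0$, and hence $\pi P_u=\pi$ for every $u\ge0$. If the initial law is $\pi$, then conditioning on the operational time $A_{\alpha,\gamma}(t)$, which is independent of $N$, gives
\[
\mathbb{P}\{N(A_{\alpha,\gamma}(t))=n\}=\int_0^\infty (\pi P_u)_n\,\mathbb{P}\{A_{\alpha,\gamma}(t)\in du\}=\pi_n .
\]
Analytically this is the same statement as the fact that the constant-in-time vector $p_n(t)\equiv\pi_n$ solves \eqref{eq:fractional_mm1}. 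Its left side vanishes because ${}^{C}D_t^{\alpha}$ of a constant is zero, so $\mathcal{D}_t^{(\alpha,\gamma)}$ of a constant is zero as well. Its right side vanishes because $\pi Q=0$. Thus $\pi$ is stationary for $N_{\alpha,\gamma}$ and coincides with the classical invariant law, which proves the invariance part of the statement.

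\emph{Convergence from $\delta_{0}$.} Conditioning on $A_{\alpha,\gamma}(t)$ in the same way gives $p_n(t)=\mathbb{E}\,[P_{A_{\alpha,\gamma}(t)}(0,n)]$. Because $\rho<1$, the classical chain is positive recurrent, so $P_u(0,n)\to\pi_n$ as $u\to\infty$, and $0\le P_u(0,n)\le1$. If $A_{\alpha,\gamma}(t)\to\infty$ in probability, bounded convergence yields $p_n(t)\to\pi_n$. Uniqueness also follows: any stationary law $\nu$ of $N_{\alpha,\gamma}$ satisfies $\nu=\nu\,\mathbb{E}[P_{A_{\alpha,\gamma}(t)}]\to\pi$, so $\nu=\pi$.

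As a Laplace-domain cross-check, \eqref{eq:gen_sol} has the form $s^{\beta-1}R(s^{\beta})$ with $\beta=\alpha+\gamma$, where $R$ is the classical resolvent. A Tauberian argument then gives $\lim_{t\to\infty}p_n(t)=\lim_{s\to0}s\tilde p_n(s)=\lim_{r\to0} r\,R_{0n}(r)=\pi_n$.

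\emph{Main obstacle.} The hard step is showing that $A_{\alpha,\gamma}(t)\to\infty$. Theorem~\ref{thm:timechange} does not give this explicitly. I would first try to read it off the relaxation kernel: for any $\theta>0$,
\[
\mathbb{E}\,e^{-\theta A_{\alpha,\gamma}(t)}=E_{\alpha,1+\gamma/\alpha,\gamma/\alpha}(-\theta t^{\alpha+\gamma}),
\]
and by the complete monotonicity used in Lemma~\ref{lem:positivity} and \cite{BeghinLeonenkoVaz2026JOTP}, this function decreases to $0$ as $t\to\infty$. That decay is exactly the statement $A_{\alpha,\gamma}(t)\to\infty$ in probability. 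Since $\theta\mapsto\mathbb{E}\,e^{-\theta A_{\alpha,\gamma}(t)}$ is the Laplace transform of the law of $A_{\alpha,\gamma}(t)$, checking that it is given by the Kilbas--Saigo function is the point that must be made precise. The identification rests on the Laplace symbol $s^{\alpha+\gamma}$ used in \eqref{eq:laplace_mm1}. If that symbol only holds formally, I would fall back on the Tauberian route above, using the bounded monotone structure of $p_n$ to justify the Tauberian step.
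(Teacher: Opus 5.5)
Your argument is correct for the statement as posed, and its main line is the paper's. The paper also combines Theorem~\ref{thm:timechange} with positive recurrence of the classical queue and the cited fact that an independent time change preserves invariant laws. Your computation $\int_0^\infty(\pi P_u)_n\,\mathbb{P}\{A_{\alpha,\gamma}(t)\in du\}=\pi_n$ is that fact written out.

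Your ``analytic'' remark deserves more weight. It is self-contained and needs neither Theorem~\ref{thm:timechange}, which the paper only sketches, nor independence of the clock. Because $\mathcal{D}_t^{(\alpha,\gamma)}$ annihilates constants, a probability vector $\nu$ gives time-constant marginals in \eqref{eq:fractional_mm1} iff $\nu Q=0$, given well-posedness of the Volterra form of the system. The boundary row $-\lambda\nu_0+\mu\nu_1=0$ and the three-term recursion then force $\nu_{n+1}=\rho\nu_n$, i.e.\ $\nu=\pi$. That gives existence and uniqueness at once. So the step you call the main obstacle, $A_{\alpha,\gamma}(t)\to\infty$, is only needed for convergence from $\delta_0$, which this theorem does not assert.

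One caution: take $Q$ to be the genuine M/M/1 generator. Read literally at $n=0$, \eqref{eq:mm1_forward} has $-(\lambda+\mu)p_0$, and then $(\pi Q)_0=-\mu(1-\rho)\neq0$.

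In your convergence add-on, two steps need repair.

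First, complete monotonicity only says that $x\mapsto E_{\alpha,1+\gamma/\alpha,\gamma/\alpha}(-x)$ decreases to the mass its Bernstein measure puts at $0$, namely $\mathbb{P}\{A_{\alpha,\gamma}(1)=0\}$. That limit need not be $0$. You can get $0$ from the Volterra form of \eqref{eq:relaxation}, $f(t)=1-\frac{\theta}{\Gamma(\alpha)}\int_0^t(t-\tau)^{\alpha-1}\tau^{\gamma}f(\tau)\,d\tau$. If $f\ge c>0$, this gives $f(t)\le 1-\theta c\,\Gamma(1+\gamma)\,t^{\alpha+\gamma}/\Gamma(1+\alpha+\gamma)\to-\infty$, a contradiction.

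Second, the Laplace cross-check and the Tauberian fallback both rest on the very symbol you doubt, so the fallback cannot rescue you. That symbol is in fact wrong for $\gamma>0$. If $\mathcal{L}[\mathcal{D}_t^{(\alpha,\gamma)}f]=s^{\alpha+\gamma}\tilde f-s^{\alpha+\gamma-1}f(0)$ held, the relaxation equation would be solved by the Mittag-Leffler function $E_{\alpha+\gamma}(-\theta t^{\alpha+\gamma})$. Its first series coefficient is $1/\Gamma(1+\alpha+\gamma)$, whereas the Kilbas--Saigo coefficient is $\Gamma(1+\gamma)/\Gamma(1+\alpha+\gamma)$. So \eqref{eq:laplace_mm1}--\eqref{eq:gen_sol} do not describe the stretched system. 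The Kilbas--Saigo law of the clock comes from \eqref{eq:ks_solution} together with complete monotonicity and Bernstein's theorem, not from $s^{\alpha+\gamma}$.
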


\begin{proof}
The extended fractional M/M/1 queueing system adopts the time change representation defined in Theorem~\ref{thm:timechange}, that is
$N_{\alpha,\gamma}(t)\stackrel{d}{=}N(A_{\alpha,\gamma}(t))$, where $N$ represents a classical M/M/1 queue and $A_{\alpha,\gamma}$ denotes the inverse of a non-decreasing process.
For $\rho<1$, we have that the classical M/M/1 queue has a positive recurrent and unique
stationary distribution. An important property of time-translated Markov processes is that time changes driven by inverse subprocesses preserve the invariant measures of the original process
\cite{MeerschaertScheffler2004,MeerschaertNaneVellaisamy2011}.

Stretched fractional dynamics only affect time because they do not alter the spatial birth-death generator. The $N_{\alpha,\gamma}$ distribution remains unchanged because it matches the $N$ distribution. The proof demonstrates that the stationary distribution exists and remains unchanged under stretched fractional dynamics.
\end{proof}

\begin{proposition}[Invariance of the stationary distribution]
\label{prop:stationary}
We assume that $\rho=\lambda/\mu<1$ holds true. The stretched fractional M/M/1 queue system which uses the process $\{N_{\alpha,\gamma}(t)\}$ reaches a stationary distribution which matches the distribution of the standard M/M/1 queue system. The equation below defines the stationary distribution of the M/M/1 queue system
\[
\pi_n = (1-\rho)\rho^n, \qquad n\in\mathbb{N}_0.
\]
\end{proposition}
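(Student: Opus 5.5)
The plan is to prove the proposition in two parts. First I will check directly that $\pi$ is a stationary solution of \eqref{eq:fractional_mm1}. Then I will show that the transient solution converges to $\pi$, using the time-change representation of Theorem~\ref{thm:timechange}.

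\emph{Step 1 (stationarity).} A time-constant family $p_n(t)\equiv\pi_n$ has vanishing Caputo derivative, because ${}^{C}D_t^{\alpha}$ acts on $f'$. Hence $\mathcal{D}_t^{(\alpha,\gamma)}\pi_n=t^{-\gamma}\cdot 0=0$. It therefore suffices to check that $\pi$ annihilates the right-hand side of \eqref{eq:fractional_mm1}. This is the classical global balance computation for the forward equation. We verify
\[
\lambda\pi_{n-1}-(\lambda+\mu)\pi_n+\mu\pi_{n+1}=(1-\rho)\rho^{n-1}\bigl(\lambda-(\lambda+\mu)\rho+\mu\rho^2\bigr)=0
\]
for $n\ge1$, using $\mu\rho=\lambda$. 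At $n=0$, where the boundary has no death from state $0$, the equation reduces to $-\lambda\pi_0+\mu\pi_1=0$. Summability and $\sum_n\pi_n=1$ require exactly $\rho<1$. Combined with Lemma~\ref{lem:positivity}, this shows $\pi$ is an admissible stationary law, and that it coincides with the classical one.

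\emph{Step 2 (convergence from $p_n(0)=\delta_{n0}$).} By Theorem~\ref{thm:timechange}, we can write
\[
p_n(t)=\mathbb{E}\bigl[P_{0n}(A_{\alpha,\gamma}(t))\bigr],
\]
where $P_{0n}(u)$ are the classical transition probabilities, assuming $A_{\alpha,\gamma}$ is independent of $N$. For $\rho<1$ the classical chain is irreducible and positive recurrent, so $P_{0n}(u)\to\pi_n$ as $u\to\infty$, with $0\le P_{0n}\le1$. If $A_{\alpha,\gamma}(t)\to\infty$ in probability, then dominated convergence gives $p_n(t)\to\pi_n$.

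\emph{Step 3 (the main obstacle).} The point needing care is the claim $A_{\alpha,\gamma}(t)\to\infty$ in probability. I would get it from the Laplace structure \eqref{eq:gen_sol}, in which the symbol $s^{\alpha+\gamma}$ plays the role of the Laplace exponent. The quantity $\mathbb{E}e^{-qA_{\alpha,\gamma}(t)}$ solves the relaxation equation \eqref{eq:relaxation} with rate $q$, so it equals $E_{\alpha,1+\gamma/\alpha,\gamma/\alpha}(-qt^{\alpha+\gamma})$ by \eqref{eq:ks_solution}. This expression tends to $0$ as $t\to\infty$ for each fixed $q>0$, which would yield the claim. The decay rests on the complete monotonicity of the Kilbas--Saigo kernel quoted in the proof of Lemma~\ref{lem:positivity}, together with its known power-law decay at infinity.

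As a fallback not relying on the stochastic representation, I would use a final-value argument on \eqref{eq:laplace_mm1}. One computes $\lim_{s\downarrow0}s\tilde p_n(s)$ by letting $s^{\alpha+\gamma}\to0$ in the classical resolvent expansion of \eqref{eq:gen_sol}. This returns the classical limit $\pi_n$, so a Tauberian theorem gives $\lim_{t\to\infty}p_n(t)=\pi_n$. That theorem applies provided $p_n$ is eventually monotone or slowly oscillating, which again follows from the completely monotone mixing structure.
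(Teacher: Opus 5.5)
Your main argument (Steps 1--3) is correct and takes a genuinely different route from the paper. The paper's proof is your fallback plus a citation. It takes the Laplace-domain system \eqref{eq:laplace_mm1}--\eqref{eq:gen_sol}, asserts that the resolvent has a limit as $s\to0^{+}$, and applies the final value theorem. It then cites the principle that time changes by inverse subordinators leave the invariant measure of the underlying chain unchanged.

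You instead prove invariance directly: the Caputo derivative annihilates constants, so stationary solutions of the stretched system are exactly those of the classical balance equations. You then prove convergence from $\delta_{n0}$ via the mixture $p_n(t)=\mathbb{E}[P_{0n}(A_{\alpha,\gamma}(t))]$, the classical ergodic theorem, and $\mathbb{P}\{A_{\alpha,\gamma}(t)\le U\}\le e^{qU}E_{\alpha,1+\gamma/\alpha,\gamma/\alpha}(-qt^{\alpha+\gamma})\to0$.

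This buys more than rigor. Your route never needs a Laplace multiplier for $\mathcal{D}_t^{(\alpha,\gamma)}$, and there is none when $\gamma>0$, because multiplication by $t^{-\gamma}$ does not commute with the Laplace transform. In fact \eqref{eq:laplace_mm1} is the transform of the equation with ${}^{C}D_t^{\alpha+\gamma}$. Its relaxation kernel is the Mittag-Leffler function $E_{\alpha+\gamma}(-qt^{\alpha+\gamma})$, not the Kilbas--Saigo function. So the paper's argument, like your fallback, is only usable for $\gamma=0$, and even then it needs the Tauberian step the paper omits; its only advantage is brevity. Taken literally, \eqref{eq:gen_sol} even gives $s\tilde G(z,s)\to0$ for $0<|z|<1$, rather than the geometric generating function. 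Drop the fallback, and drop the appeal to \eqref{eq:gen_sol} in Step 3. Identifying $\mathbb{E}e^{-qA_{\alpha,\gamma}(t)}$ through the scalar relaxation equation is the right way.

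Two points to tighten.

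First, \eqref{eq:fractional_mm1} as written (with $p_{-1}=0$) has $-(\lambda+\mu)p_0$ at $n=0$. That gives $\mathcal{D}_t^{(\alpha,\gamma)}\sum_n p_n=-\mu p_0$ and makes $\pi$ violate the $n=0$ equation. Your Step 1 uses the correct boundary equation $\mathcal{D}_t^{(\alpha,\gamma)}p_0=-\lambda p_0+\mu p_1$. State explicitly that this is the model you treat, rather than saying the equation ``reduces to'' it, since the proposition fails for the literal system.

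Second, you can make Steps 2--3 self-contained instead of leaning on Theorem~\ref{thm:timechange}, which specifies neither independence nor the law of $A_{\alpha,\gamma}$.
\begin{itemize}
\item By complete monotonicity and Bernstein's theorem, $q\mapsto E_{\alpha,1+\gamma/\alpha,\gamma/\alpha}(-qt^{\alpha+\gamma})$ is the Laplace transform of a probability law $\nu_t$ on $[0,\infty)$. Since this transform is entire, $\nu_t$ has all exponential moments and $\int u^k\,\nu_t(du)=k!\,c_k t^{k(\alpha+\gamma)}$.
\item The M/M/1 generator $Q$ is bounded on $\ell^1$, so $\int\delta_0e^{uQ}\,\nu_t(du)=\sum_k c_kt^{k(\alpha+\gamma)}\delta_0Q^k$.
\item This series solves the system termwise, because ${}^{C}D_t^{\alpha}t^{\beta}=\frac{\Gamma(1+\beta)}{\Gamma(1+\beta-\alpha)}t^{\beta-\alpha}$ and $c_k\Gamma(1+k(\alpha+\gamma))/\Gamma(1+k(\alpha+\gamma)-\alpha)=c_{k-1}$.
\item Uniqueness for the associated Volterra equation identifies it with $(p_n(t))_n$.
\end{itemize}
The decay of the kernel to $0$ then follows without asymptotics. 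Write $f(t)=1-q\,I^{\alpha}[s^{\gamma}f(s)](t)$, with $I^{\alpha}$ the Riemann--Liouville integral; here $f$ is positive and nonincreasing. A positive limit $L$ would force $f(t)\le1-qL\,\Gamma(1+\gamma)t^{\alpha+\gamma}/\Gamma(1+\alpha+\gamma)$, which is eventually negative.
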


\begin{proof}
The Laplace transform of state probabilities which we obtained in Section ~\ref{sec:analysis} serves as our next study point. The stretched fractional Kolmogorov system resolvent shows a defined limit which exists for all values of  $s\to0^{+}$. The final value theorem establishes the limit which determines the process's stationary distribution.

While the stretched fractional operator $\mathcal{D}_t^{(\alpha,\gamma)}$
alters temporal relaxation only through a nonlocal memory kernel,
the spatial birth-death structure of the generator remains unchanged.
It has been shown for general non-Markovian time-varying processes
\cite{MeerschaertScheffler2004,MeerschaertNaneVellaisamy2011},
that time variations generated by non-decreasing processes
do not change the invariant measure of the associated Markov process.
Consequently, the distribution of the extended fractional
M/M/1 queue converges to the distribution of the classical M/M/1 queue with parameter $\rho=\lambda/\mu$.
\end{proof}

\subsection{Stationary distribution}
The stretched fractional dynamics develop their direct effects to control the transient regime under the condition that $\rho<1$. The system reaches a stable state which follows the M/M/1 invariant law according to Theorem~\ref{thm:stationary} and Proposition~\ref{prop:stationary}.
\subsection{Interpretation of the fractional parameters}

The parameter $\alpha$ controls the strength of the memory effect inherited from fractional calculus while $\gamma$ introduces an additional stretching of the time variable. The system exhibits slower stationarity progression according to classical behavior for the case of $\alpha+\gamma<1$ becauseof its heavy-tailed inter-event times. The model starts to exhibit Markovian behavior when $\alpha+\gamma$ reaches values that approach one. The proposed framework demonstrates flexible operation through its capacity to create smooth transitions between classical fractional and stretched fractional queueing dynamics.

\section{Numerical Illustrations}\label{sec:num} 

The section provides numerical examples which show how the fractional parameters
$\alpha$ and $\gamma$ affect the transient behavior of the fractional M/M/1 queue
model that we developed. The arrival and service rates $\lambda,\mu>0$ are maintained at constant
values which create a stability condition because $\rho=\lambda/\mu<1$. The study examines multiple parameter pairs which meet the condition that $\alpha+\gamma\le 1$ while testing their performance against the traditional Markovian model.

\noindent
Table~\ref{tab:simsetup} outlines the design of the simulation that was used to make Figures~\ref{fig:p0}--\ref{fig:mean}. Samples of the underlying classical M/M/1 queue are generated by an event-driven algorithm, and performance measures are estimated using Monte Carlo replication; see, e.g., \citet{Ross2012Simulation,AsmussenGlynn2007}.
In the stretched-fractal model, physical time corresponds to operational time by means of a random time-change multiplier generated according to the truncated beta-product scheme.

\begin{table}[ht]
\centering
\caption{Simulation setup utilized for the numerical illustrations.}
\label{tab:simsetup}
\begin{tabular}{ll}
\toprule
Arrival rate & $\lambda = 0.8$ \\
Service rate & $\mu = 1.0$ \\
Traffic intensity & $\rho=\lambda/\mu=0.8$ \\
Time grid 
& $t_i = \dfrac{i}{249}\,20,\quad i=0,\ldots,249$ \\
Replication count & $R=3000$ \\
Snapshot time & $t^\ast=8$ \\
Truncation level & $n_{\max}=35$ \\
Beta-product truncation & $M_Z=250$ \\
Seed (optional) & $20260117$ \\
\bottomrule
\end{tabular}
\end{table}
\noindent
\textit{Remark.} During implementation, we set the initial grid point to
$t_0=10^{-6}$ to prevent numerical instability at $t=0$ in the stretched power term.

\subsection{Transient relaxation of the empty-state probability}
The probability that the system is empty at time $t$ first describes the transient relaxation behavior we examined with the equation $p_0(t)=\mathbb{P}\{N(t)=0\}$. The M/M/1 queueing system demonstrates that the probability converges exponentially to the steady-state value $1-\rho$. Extended fractional dynamics exhibit a non-exponential model, resulting in a slower process to reach the steady state.

Figure~\ref{fig:p0} depicts the evolution of $p_0(t)$ for different
$(\alpha,\gamma)$ values. The memory effects introduced by the extended fractional operator slow down the relaxation process since they become dominant when the memory value decreases from its current state. The classical curve $(\alpha,\gamma)=(1,0)$ is included as a reference.

\begin{figure}[ht]
    \centering
    \includegraphics[width=0.85\linewidth]{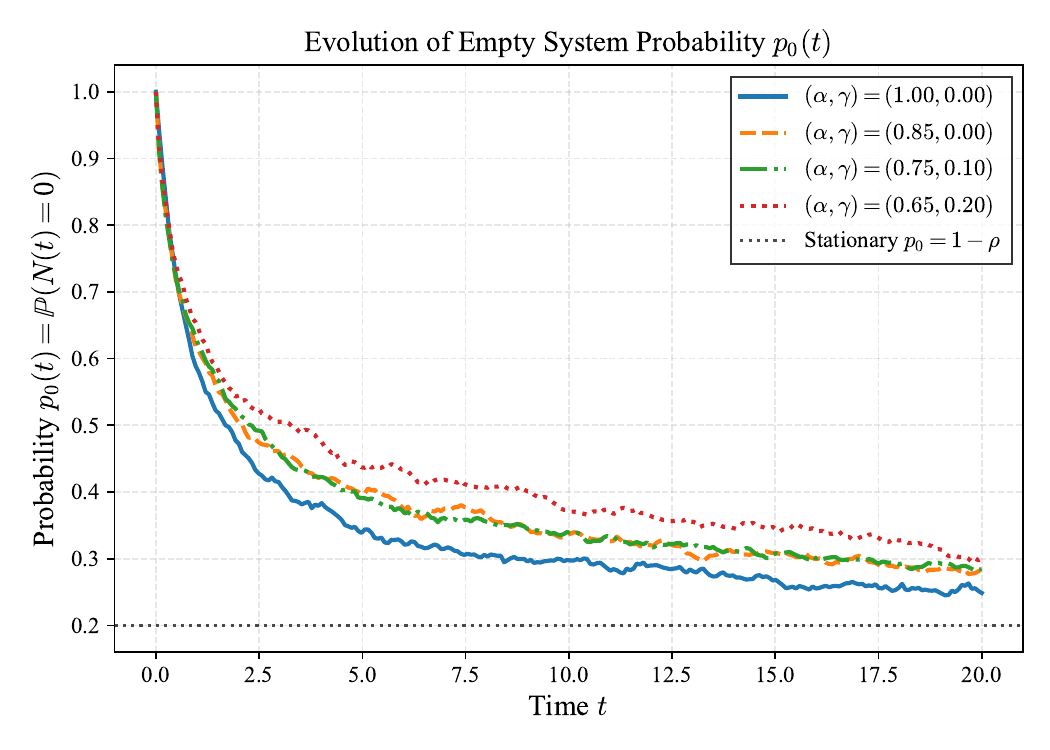}
    \caption{Evolution of the empty-system probability $p_0(t)$ with respect to different combinations of
    $(\alpha,\gamma)$ for given arrival and service rates $\lambda$ and $\mu$.
    The dashed horizontal line depicts the stationary value $1-\rho$.}
    \label{fig:p0}
\end{figure}

\subsection{Distributional snapshots of the queue length}

To explain the effect of stretched fractional dynamics in more detail, we examine the full
queue-length  distribution at a fixed observation time $t^\ast$. The classical M/M/1 queue concentrates its probability mass near small tail sizes, while fractional dynamics tend to hold a heavier tail over longer periods at intermediate $n$ values.

Figure~\ref{fig:pn} depicts the empirical probabilities $p_n(t^\ast)$ with $t^\ast=8$
for $n=0,1,\ldots,n_{\max}$. In comparison to the Markovian case, as $n$ increases, the fractional models exhibit a noticeably slower decrease in $p_n(t^\ast)$, which highlights the persistence effects caused by the nonlocal time operator.
\begin{figure}[htbp]
    \centering
    \includegraphics[width=0.85\linewidth]{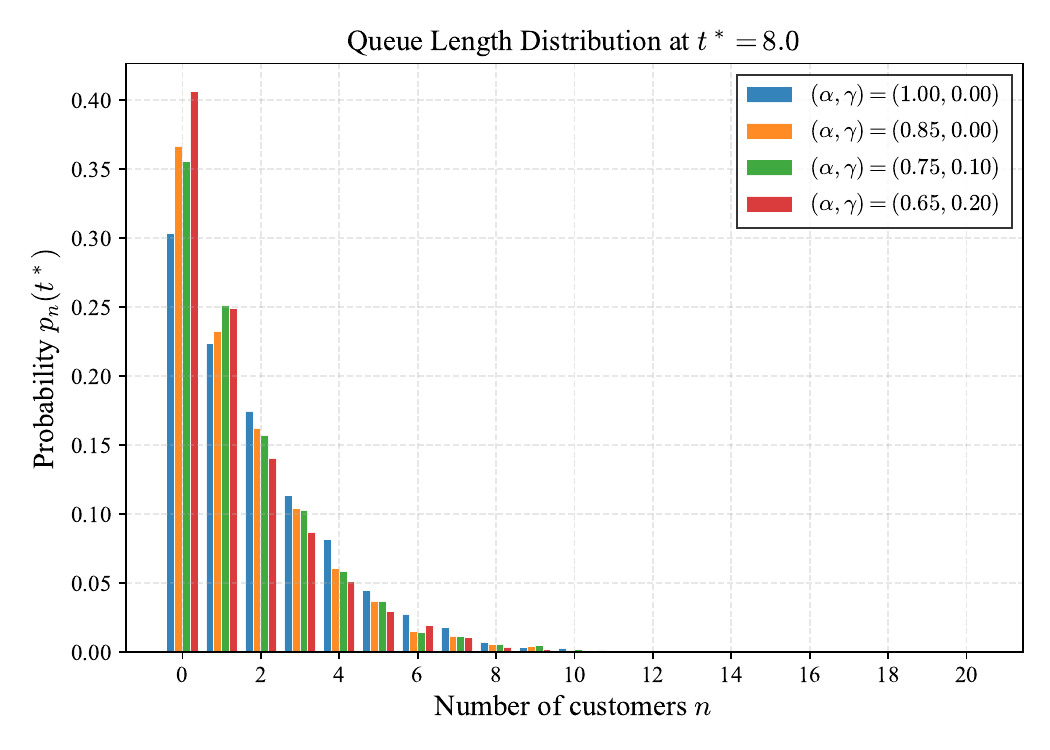}
    \caption{Queue-length distribution $p_n(t^\ast)$ at fixed time $t^\ast=8$ for different $(\alpha,\gamma)$. 
Fractional dynamics yield a broader spread than the classical M/M/1 case.
.}
    \label{fig:pn}
\end{figure}
\subsection{Mean queue length and convergence to equilibrium}
We consider the temporary average queue length as a global performance measure.
\[
m(t)=\mathbb{E}[N_{\alpha,\gamma}(t)] = \sum_{n=0}^\infty n\,p_n(t).
\]
For $\rho<1$, we have the constant mean $\rho/(1-\rho)$, which is independent of the fractal parameters. For the same parameter values as in Figure~\ref{fig:mean} and Figure~\ref{fig:p0}, Figure~\ref{fig:mean} illustrates the evolution of $m(t)$.
While all curves converge to the same constant limit, the convergence rate
depends strongly on $(\alpha,\gamma)$. Smaller values of $\alpha+\gamma$ lead to a noticeably slower relaxation, which confirms that extended fractional dynamics primarily change the transient regime and preserve the long-term equilibrium properties.
\begin{figure}[ht]
    \centering
    \includegraphics[width=0.85\linewidth]{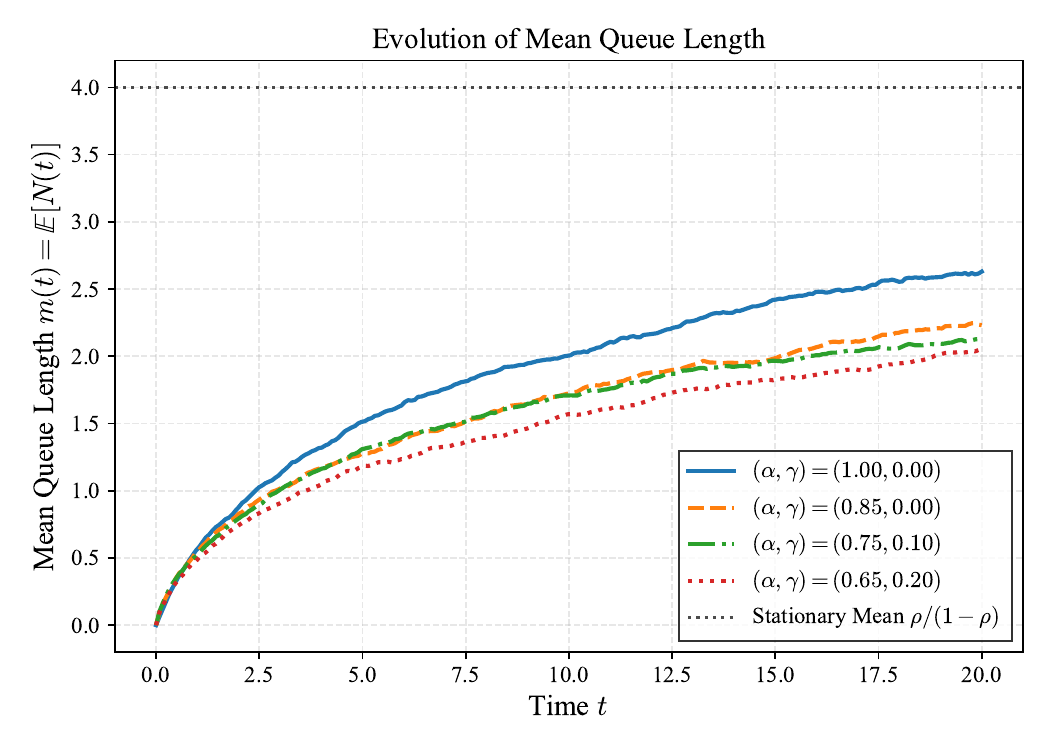}
    \caption{The evolution of the expected queue length $m(t)$ for different fractional parameter pairs $(\alpha,\gamma)$. 
The dashed line indicates the steady-state mean $\rho/(1-\rho)$.
}
    \label{fig:mean}
\end{figure}

\paragraph{Implementation details.}
The complete numerical outcomes of the study use Monte Carlo simulation together with the time-change model of the fractional M/M/1 queue system. The operational time of the classical M/M/1 queue system generates independent sample paths which an event-driven algorithm produces and which the stretched fractional dynamics evaluation assesses at random operational times. The study applies uniform time grid selection while establishing queue length truncation limits at levels which do not produce noticeable tail impact. The researchers create all figures in vector format to maintain both numerical precision and visual accuracy.

\newpage
\section{Conclusion}\label{sec:conc} 

In this paper,we have introduced a fractional extension of the classical M/M/1 queue based on
stressed non-local time operators.
We obtained a non-Markovian queueing model that incorporates memory effects into the temporal evolution whilst preserving the linear transition property of the classical system by substituting the standard time derivative in the Kolmogorov forward equations by a stretched fractional operator.

Analytical representations of state probabilities are derived using Laplace
transformation techniques and represented using the Kilbas-Saigo function, which naturally arises by solving the fundamental fractional relaxation equation.
The suggested formulation is a generalization of both classical and Caputo-based
time-fractional M/M/1 queues, and its agreement with known models has been verified through several limiting cases.

Numerical simulations using time-varying models illustrate the effect of fractional parameters on the transient regime.
It was observed that the parameters governing extended fractional dynamics
had a significant effect on convergence rates and transient performance measures,
whereas the stationary distribution stayed similar to the classical M/M/1 queue
distribution under normal stability conditions.
We present a flexible and mathematically consistent approach for modeling queueing systems with memory and nonlocal temporal effects.Possibilities for future extensions include examining control and optimization problems under multi-server systems, queueing networks, and stretched fractional dynamics.
\section*{Acknowledgements}
The author declares that no external funding was received for this research.

\bibliographystyle{apalike}
\bibliography{bibliography}

\end{document}